\documentclass[preprint, showkeys,floatfix, prX]{revtex4}
\usepackage{amsmath}
\usepackage{amsthm}
\usepackage{amsbsy}
\usepackage{amssymb}
\usepackage{graphicx}
\usepackage{lineno}
\usepackage{color}
\usepackage[para,online,flushleft]{threeparttable} 
\usepackage[normalem]{ulem}

\newtheorem*{lemma}{Lemma}

\setlength{\parskip}{2ex plus 0.5ex minus 0.2ex}


\begin{document}

\title{Resilience, reactivity and variability : \\
A mathematical comparison of ecological stability measures}

\author{J-F. Arnoldi}
\email{jean-francois.arnoldi@ecoex-moulis.cnrs.fr}

\author{M. Loreau}
\author{B. Haegeman}

\affiliation{Centre for Biodiversity Theory and Modelling, Station d'\'Ecologie Exp\'erimentale du CNRS 09200 Moulis France}

\date{\today}

\begin{abstract}
 In this article, we briefly describe the RevTex 4 package for users interested in scientific publication around APS journals.
\end{abstract}

\begin{abstract}
In theoretical studies, the most commonly used measure of ecological stability is resilience: ecosystems asymptotic rate of return to equilibrium after a pulse-perturbation $-$or shock.  A complementary notion of growing popularity is reactivity: the strongest initial response to shocks.  On the other hand, empirical stability is often quantified as the inverse of temporal variability, directly estimated on data, and reflecting ecosystems response to persistent and erratic  environmental disturbances. It is unclear whether and how this empirical measure is related to resilience and reactivity. Here, we establish a connection by introducing  two variability-based stability measures belonging to the theoretical realm of resilience and reactivity. We call them intrinsic, stochastic and deterministic invariability; respectively defined as the inverse of the strongest stationary response to white-noise and to single-frequency perturbations. We prove that they predict  ecosystems worst response to broad classes of disturbances, including realistic models of environmental fluctuations. We show that they are intermediate measures between resilience and reactivity and that, although defined with respect to persistent perturbations, they can be related to the whole transient regime following a shock, making them more integrative notions than reactivity and resilience. We argue that  invariability measures constitute a stepping stone, and discuss the challenges ahead to further unify theoretical and empirical approaches to stability.
\end{abstract}

\keywords{ecological equilibrium, intrinsic stability, pulse-perturbation, persistent perturbation, transient dynamics.}

\maketitle


\section{Introduction}

What determines the stability of ecosystems has been a driving question throughout the history of ecology \citep{may_stability_1973b, pimm_complexity_1984, mccann_diversity_2000, loreau_biodiversity_2013}.  Numerous hypotheses have been proposed, explored theoretically and tested empirically.  However, the preliminary question of how to quantify stability has received less attention.  Many measures of ecological stability exist, but the choice between them is often made on purely pragmatic grounds. As a consequence, results of stability studies are often difficult to compare, because it is not clear how much these results depend on the specific choice of stability measure. In this context, clarifying the relationships and the differences between measures would be very useful.

When attempting such a clarification, one is easily overwhelmed by the vast range of regularly used stability measures.  Therefore, we start by restricting the setting in which we consider the problem of quantifying ecological stability.  First of all, we limit our attention to ecological systems whose dynamics tend to an equilibrium point.  Although it might be restrictive from an empirical viewpoint, this assumption is common in theoretical studies of ecological stability \citep{neutel_stability_2002, rooney_structural_2006, thebault_stability_2010, allesina_stability_2012}.  Indeed, this assumption allows to introduce a substantial simplification.  By focusing on the dynamics close the equilibrium point, the system can be linearized.  We assume that the equilibrium point is stable, that is, every trajectories of the linear system eventually reaches the equilibrium point.  We are then interested in quantifying the degree of stability of the linear dynamics.

Even in the simple setting of linear dynamics in the vicinity of an equilibrium point, there is a multitude of stability measures.  Typically, these measures are based on the system response to a particular perturbation (Fig.~\ref{fig:stab_measures}).  The larger the intensity or the duration of the response, the less stable the system.  The classical stability theory \citep{may_stability_1973b} is largely based on the concept of \textbf{asymptotic resilience} $\mathcal{R}_{\infty}$.  It is defined as the  asymptotic ($t\to\infty$) rate of return to equilibrium after a displacement.  The displacement does not have to decay at this asymptotic rate right away.  It might even be amplified before eventually approaching equilibrium, as captured by the notion of reactivity: the strongest initial ($t=0$) amplification of a displacement \citep{neubert_alternatives_1997}. To deduce a measure of stability, we simply define \textbf{initial resilience} $\mathcal{R}_{0}$ as the opposite of reactivity (i.e., same absolute value but opposite sign). Both resilience measures are exclusively determined by the system intrinsic dynamics.  

On the other hand, most empirical studies quantify stability as the inverse of temporal variability, directly estimated on time-series data. \citep{tilman_biodiversity_2006, jiang_different_2009, campbell_experimental_2011, donohue_dimensionality_2013}.  Although theoretical studies have also considered stability measures based on variability \citep{ives_stability_1999, lehman_biodiversity_2000, loreau_species_2008}, the link with resilience is not obvious. Indeed, in contrast with resilience, variability is caused by persistent perturbations, depends on the direction and intensity of these perturbations, and on the ecosystem variable that is observed, such as total biomass. 

As a first step in attempting to bridge the gap between empirical and theoretical measures, we define two theoretical measures of \emph{in}variability (Fig.~\ref{fig:stab_measures}):
\begin{itemize}
\item \textbf{Intrinsic stochastic invariability} $\mathcal{I}_\text{S}$ constructed from the stationary response of ecosystems to stochastic perturbations of zero-mean and persisting through time. A linear system that is perturbed by a white-noise signal eventually exhibits Gaussian fluctuations  \citep{arnold_stochastic_2013}.  The larger the variance of the stationary response, the less stable the system.  We use the inverse of this variance to define stochastic invariability $\mathcal{I}_\text{S}$ (but see section \ref{sec:stoch invariability} for a precise definition).  Stochastic white-noise perturbations are  popular in ecological studies as they are considered a simple model of environmental fluctuations \citep{may_stability_1973, ives_stability_1999,loreau_species_2008}.
\item \textbf{Intrinsic deterministic invariability} $\mathcal{I}_\text{D}$ constructed from the stationary response of ecosystems to zero-mean periodic perturbations that persist through time. A linear system that is perturbed by a periodic signal eventually oscillates at the same frequency as the driving signal \citep{ritger_differential_1968}.  The larger the amplitude of the stationary response, the less stable the system. We use the inverse of this amplitude to define deterministic invariability $\mathcal{I}_\text{D}$ (but see section \ref{sec:I_D} for a precise definition). Periodic perturbations have been used in ecological studies \citep{nisbet_population_1976, king_rainbow_1999}, and capture fundamental properties of linear systems.
\end{itemize}
Although defined for two very specific classes of perturbations, we show that the inverse of these two measures predicts ecosystems maximal response to much broader sets of disturbances: shocks occurring without temporal correlation for $\mathcal{I}_\text{S}$, and stationary perturbations with possibly long-term correlations for $\mathcal{I}_\text{D}$. This first result makes the two invariability measures complementary and easy to interpret.

By considering maximal responses over specific classes of disturbances, we have stripped several dependencies from the variability-based stability measures: they do no longer depend on direction and intensity of the applied perturbation, nor on a choice of observation variable.  Hence, the resulting invariability measures $\mathcal{I}_\text{S}$ and $\mathcal{I}_\text{D}$ are exclusively determined by the system intrinsic dynamics. Because the two resilience measures $\mathcal{R}_{\infty}$ and $\mathcal{R}_{0}$ are also intrinsic, the four stability measures can be compared.
We show that the following chain of inequalities holds in full generality,
\[
\mathcal{R}_{0} \leq \mathcal{I}_\text{S} \leq \mathcal{I}_\text{D} \leq \mathcal{R}_{\infty},
\]
meaning that, for any given system, initial resilience gives the lowest value of stability, whereas asymptotic resilience always attributes the highest. For systems with particular symmetry properties, the four measures coincide. However, this should not lead to the conclusion that the stability measures are essentially equivalent. In fact, we provide simple examples for which measures differ by orders of magnitude. 

Finally, we explain that, although defined with respect to persistent perturbations, invariability measures relate to the whole transient regime following a single shock.  In contrast, resilience measures only focus on specific short-term and asymptotically long-term responses, indicating that they are less integrative notions of ecological stability.

\begin{figure} 
\begin{center}
\includegraphics[scale=0.70]{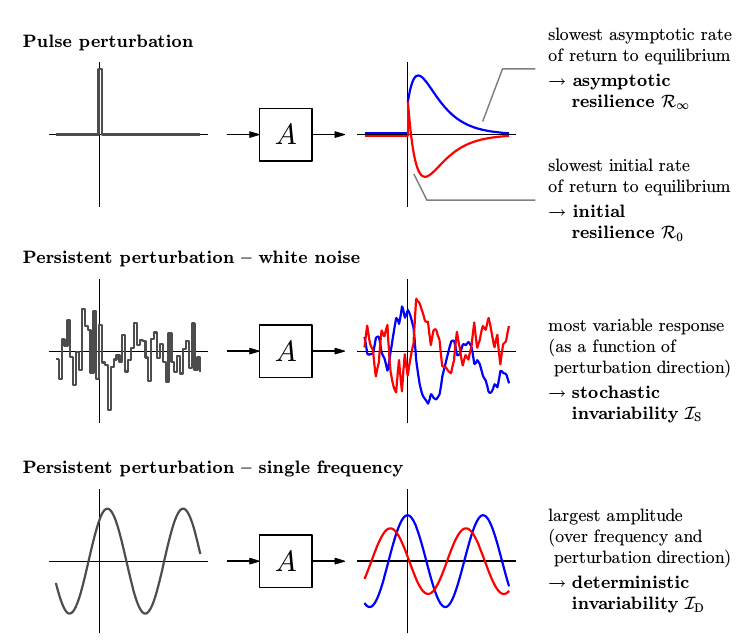}
\end{center}
\caption{
\footnotesize\label{fig:stab_measures}
\textbf{Four measures of ecological stability.}  Stability can be quantified by applying a perturbation (left graphs) to a system (here represented by community matrix $A$) and measuring its response (right graphs;  blue and red curves can be interpreted as biomass changes of two species through time).  Each stability measure we consider corresponds to the worst-case system response for a specific class of perturbations. Asymptotic resilience $\mathcal{R}_\infty$ is the slowest asymptotic rate of return to equilibrium after a pulse perturbation.  Initial resilience $\mathcal{R}_0$ is the slowest initial recovery rate.  Intrinsic stochastic invariability $\mathcal{I}_\text{S}$ is inversely proportional to the variance of the maximal response to white-noise perturbations.  Intrinsic deterministic invariability $\mathcal{I}_\text{D}$ is the inverse of the amplitude of the maximal response to single-frequency perturbations. In this paper we show that these four measures are comparable, despite the different classes of perturbations considered.
}
\end{figure} 

\section{Resilience measures}\label{sec:resiliences}

Before introducing invariability, we first describe the theoretical setting of intrinsic stability measures. We give the definitions of the classical notions of resilience (initial and asymptotic) and comment on some basic properties. We refer to Appendix \ref{sec:Notations} for the mathematical notations used throughout the paper.

Consider a non-linear dynamical system in continuous time.  It may describe, for example, a spatially structured population, a competitive community, species interacting in a food web, or abiotic and biotic flows in an ecosystem model. For convenience of speech, we shall use the terminology of a community of interacting species.  In this case, the dynamical variables correspond to species abundances or biomass, and the dynamical system describes how these abundances or biomass change over time through species interactions.
We assume there are $S$ dynamical variables, and represent these variables as a vector $\boldsymbol{N}(t)$.  The dynamical system is described by a set of coupled differential equations,
$d\boldsymbol{N}/dt = \boldsymbol{f}\left(\boldsymbol{N}\right)$.
We assume these equations admit an equilibrium point $\boldsymbol{N}_{*}$, so that
$\boldsymbol{f}\left(\boldsymbol{N}_*\right) = 0$.
The local dynamics in the vicinity of $\boldsymbol{N}_{*}$ are characterized by a matrix  
$A = D\boldsymbol{f}\left(\boldsymbol{N}_{*}\right)$,
the Jacobian of the dynamical equations evaluated at the equilibrium.  For interacting species, this matrix is called community matrix.  Denoting by $\boldsymbol{x}(t) = \boldsymbol{N}(t)-\boldsymbol{N}_{*}$ the displacement from equilibrium, the local dynamics are well approximated by a linear dynamical system: 
\begin{equation}
d\boldsymbol{x}/dt = A\boldsymbol{x}. \label{eq:lin-system}
\end{equation}
$\boldsymbol{N}_*$ is locally stable if and only if all eigenvalues of $A$ have negative real part.
Stability measures quantify the degree of stability of an equilibrium. The most common such measure is asymptotic resilience, that we now describe.

\subsection{Asymptotic resilience} 

The term resilience is used with different meanings in the ecological literature.  We use resilience as the rate of return to equilibrium, as is common in many studies of ecological stability \citep{pimm_balance_1991}.  In contrast, the definition of Holling \citep{holling_resilience_1973} is based on the size of the basin of attraction of the equilibrium.  While the latter notion is a characteristic of the non-linear dynamics, in this paper we only focus on local stability properties, encoded in the linear system (\ref{eq:lin-system}). We thus assume that under perturbations the dynamical variables remain within the basin of attraction.

Asymptotic resilience quantifies local stability as the long-term rate of return to equilibrium. 
Let us assume that at time $t=0$ a shock displaces the system to $\boldsymbol{x}(0) = \boldsymbol{x}_{0}$.  In the linear approximation, the relative abundances $\boldsymbol{x}$ evolve according to (\ref{eq:lin-system}), the solution of which is given by $\boldsymbol{x}(t) = e^{tA}\boldsymbol{x}_{0}$.  If the equilibrium is stable, any trajectory eventually leads back to it.  Using the norm $\|\boldsymbol{x}\|$ to measure Euclidean distance in phase space, the asymptotic rate of return to equilibrium reads
\[
 - \lim_{t\to\infty} \frac{1}{t} \ln \left\Vert \boldsymbol{x}(t) \right\Vert
 = - \lim_{t\to\infty} \frac{1}{t} \ln \left\Vert e^{tA}\boldsymbol{x}_{0} \right\Vert.
\]
This expression depends on the initial displacement $\boldsymbol{x}_{0}$.  To get an intrinsic stability measure, i.e., a measure that depends only on the community matrix $A$, we consider the slowest asymptotic rate of return over all initial displacements $\boldsymbol{x}_{0}$:
\[
 \mathcal{R}_{\infty}
 = \inf_{||\boldsymbol{x}_0||=1}
   \Big( -\lim_{t\to\infty} \frac{1}{t} \ln\left\Vert e^{tA}\boldsymbol{x}_0 \right\Vert \Big)
 = -\lim_{t\to\infty} \frac{1}{t} \ln\left\Vert e^{tA} \right\Vert.
\]
This equation defines an intrinsic stability measure, called asymptotic resilience. The faster the system returns to equilibrium, the more stable it is. In fact, trajectories will generically converges to the direction spanned by the eigenvector associated to the eigenvalue with largest real part, $\lambda_\text{dom}(A)$, which limits the return to equilibrium (Fig.~\ref{fig:resilience}). It follows that asymptotic resilience can be computed from this \emph{dominant} eigenvalue, $\lambda_\text{dom}(A)$, as
\begin{equation}
 \mathcal{R}_{\infty} = -\Re\big(\lambda_{\mathrm{dom}}(A)\big), \label{eq:resilience}
\end{equation}
(where $\Re(\lambda)$ is the real part of the complex number $\lambda$). If $\mathcal{R}_\infty$ is negative, some trajectories indefinitely move away.  Hence, $\mathcal{R}_\infty$ must be positive for the equilibrium to be stable. We shall sometimes refer to the eigenvector associated to $\lambda_\text{dom}$ as \emph{slow}, or \emph{dominant}, eigenvector spanning the direction of slowest return to equilibrium, towards which most trajectories converge to (note that in discrete-time dynamics, it is the eigenvalue with maximal modulus, and the associated eigenvector, that asymptotically dominate the dynamics).

The definition of asymptotic resilience is illustrated in Fig.~\ref{fig:resilience}.  For a community of $S=2$ species, we plot three trajectories in the plane $(x_1,x_2)$ (left panel).  The three trajectories have different initial conditions, corresponding to different initial displacements.  After a sufficiently long time, the distance to equilibrium decays at a fixed exponential rate (right panel; note the logarithmic scale on the $y$-axis).  This rate is the same for the three trajectories, equal to $\mathcal{R}_\infty$.

Asymptotic resilience is the most commonly used stability measure in theoretical ecology \citep{may_stability_1973b, pimm_balance_1991, neutel_stability_2002, rooney_structural_2006, thebault_stability_2010}. Note that the inverse of $\mathcal{R}_\infty$ has the dimension of time,  which is often interpreted as a characteristic return time to equilibrium.

\begin{figure} 

\begin{center}
\includegraphics[scale=0.70]{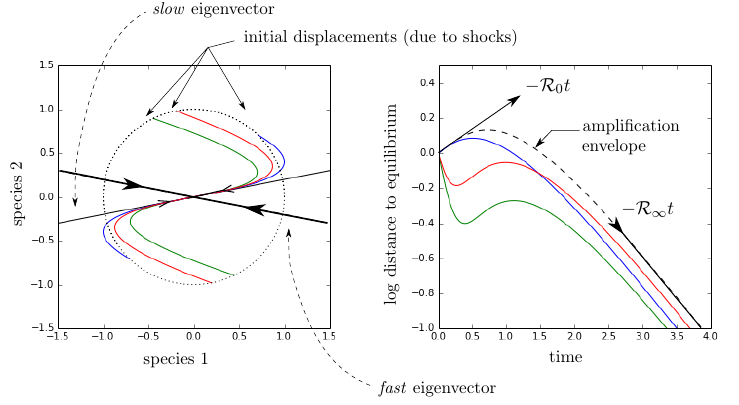}
\end{center}

\newsavebox{\smlmata} 
\savebox{\smlmata}{$A = \big( \begin{smallmatrix} -1 & 2.5 \\ 0.1 & -1 \end{smallmatrix} \big)$}
\caption{
\footnotesize\label{fig:resilience}
\textbf{Definition of asymptotic and initial resilience.}  The community matrix \usebox{\smlmata} models a mutualistic community with asymmetric interactions ($A_{12} \neq A_{21}$), near equilibrium. We have $\mathcal{R}_\infty =0.5$ and  $\mathcal{R}_0 =-0.3$, indicating that the system is reactive.  We show three trajectories (green, red, blue) starting at unit distance from the equilibrium (and their mirror image).  They represent the system response to various normalized shocks.  Left panel: plot in phase plane $(x_1,x_2)$, with the eigenvectors represented in black.  Right panel: plot of $\|\boldsymbol{x}(t)\|$ with logarithmic scale on $y$-axis. The dashed curve represents the amplification envelope, meaning the envelope of the distance to equilibrium of all trajectories starting at distance one. It is computed as the spectral norm of $e^{tA}$ (in log scale on right panel).  Asymptotic resilience $\mathcal{R}_\infty$ is the slowest asymptotic rate of return (slope for large time in right panel). Note that a displacement along the fast direction (a non-generic shock) would present a steeper asymptotic slope, corresponding to the real part of the sub-dominant eigenvalue.  Initial resilience $\mathcal{R}_0$ is the slowest initial rate of return to equilibrium (opposite of largest slope at $t=0$ in right panel).  Initial resilience can be negative, as in the example shown here, meaning that there exist trajectories (for example, the blue one) for which the initial displacement is amplified.
}
\end{figure} 

\subsection{Initial resilience} 

Asymptotic resilience characterizes the long-term response to a single shock.  However, as illustrated in Fig.~\ref{fig:resilience}, it is not necessarily related to the short-term response.  In particular, not all displacements instantly decay at the same rate.  Some displacements can even grow before eventually decaying.  When such displacements exist, the system is said to be reactive.  In \citep{neubert_alternatives_1997} reactivity is defined as the strongest initial amplification of an instantaneous displacement. We define initial resilience $\mathcal{R}_{0}$ as the opposite of reactivity, that is:
\begin{equation}
 \mathcal{R}_{0}
 = \inf_{||\boldsymbol{x}_0||=1}
   \bigg( -\frac{d}{dt} \left\Vert e^{tA}\boldsymbol{x}_0 \right\Vert \bigg|_{t=0} \bigg)
 = -\frac{d}{dt}\left\Vert e^{tA}\right\Vert \bigg|_{t=0}. \label{eq:reactivity}
\end{equation}
Initial resilience is positive when the system is non-reactive.  In this case, the larger $\mathcal{R}_{0}$, the faster the system initially evolves towards equilibrium, the more stable the system \citep{tang_reactivity_2014,suweis_resilience_2015}.  As for asymptotic resilience, $\mathcal{R}_{0}$ is an intrinsic stability measure, i.e., it depends only on the community matrix $A$.  It can be computed as the opposite of the dominant eigenvalue of the symmetric part $(A+A^\top)/2$ of $A$ ($A^{\top}$ is the transpose of $A$):
\begin{equation}
 \mathcal{R}_{0}
 = -\frac{1}{2}\lambda_{\mathrm{dom}}\left(A+A^{\top}\right). \label{eq:reactivity_spect}
\end{equation} 
The definition of initial resilience is illustrated in Fig.~\ref{fig:resilience}.  The three trajectories have different initial amplification, as can be seen from the initial slopes of the curves in the right panel.  For one of them (shown in blue), the slope is positive, meaning that the system is reactive.  In fact, this trajectory has the largest slope of all initial displacements, so that initial resilience is equal to the opposite of this slope.

The similarity of (\ref{eq:resilience}) and (\ref{eq:reactivity_spect}) shows that asymptotic and initial resilience are equal for certain matrices $A$.  In particular, if $A$ is symmetric, i.e., if $A = A^\top$, then the symmetric part $(A+A^\top)/2 = A$, and $\mathcal{R}_{0}=\mathcal{R}_{\infty}$.  More generally, this equality holds for normal matrices satisfying $A A^\top = A^\top\!A$ \citep{trefethen_spectra_2005}.  However, non-normality does not imply that $\mathcal{R}_{0} \neq \mathcal{R}_{\infty}$ $-$see (\ref{eq:example-nonnormal}) in Appendix \ref{sec:Notations}.  In the following, we call a matrix $A$ relatively reactive if  $\mathcal{R}_{0} \neq \mathcal{R}_{\infty}$. Note that a matrix is relatively reactive if it is reactive. On the other hand, a relatively reactive matrix need not be reactive. Hence, reactivity implies relative reactivity but relative reactivity does not imply reactivity.

We give a geometric intuition about relative reactivity \citep{snyder_what_2010}.  Normal matrices, which are not relatively reactive, are characterized by the property of having orthogonal eigenvectors. One can think of relative reactivity as being caused by the non-orthogonality of the eigenvectors.  This is visible in the left panel of Fig.~\ref{fig:resilience}, representing trajectories in the plane $(x_1,x_2)$.  Because the two eigenvectors are close to being collinear, some trajectories are dragged along the ``fast direction'' (associated to the non-dominant eigenvalue).  By doing so, these trajectories move away from the equilibrium while converging to the ``slow direction'' (associated to the dominant eigenvalue).

By construction, initial and asymptotic resilience are two extreme characteristics of the system recovery regime from a shock (pulse perturbation). The whole transient leading back to equilibrium cannot be expected, in general, to be fully described by the two measures of resilience. This suggests that there is room for intermediate measures of stability, taking into account the integrality of the transient.  As we shall see in the following sections, measures of temporal invariability do just that.

\section{Intrinsic stochastic invariability}\label{sec:stoch invariability}

Dynamical stability relates to the ability of a system to absorb perturbations.  To define resilience, we considered single shocks (or pulse perturbations), but these are only one type of disturbances that can be applied to the system. In fact, a simple way to model fluctuations observed on time series data, is to see them as the effect of persistent environmental disturbances. In this approach, the stable equilibrium of (\ref{eq:lin-system}) is replaced by the stationary response to those environmental perturbations. To define intrinsic stochastic invariability $\mathcal{I}_\text{S}$, we consider a specific class of stochastic disturbances, namely, white-noise perturbations, assuming that the environment fluctuates randomly and without memory.  

Mathematically, white noise is described as the ``derivative'' of Brownian motion, the continuous-time version of a random walk. To construct a Brownian motion, it is convenient to consider infinitesimal time steps, $t_k = k \delta t \rightarrow  t_{k+1} = (k+1) \delta t$, of length $\delta t$.  At each time $t_k$, a displacement is drawn from a Gaussian distribution of zero mean and variance $\delta t$.  In the continuous-time limit $\delta t \to 0$, this defines a Brownian motion $W(t)$. One defines its derivative $\xi(t)$ as the stochastic signal satisfying $W(t)= \int_{0}^{t} \xi(s) ds$, which is often written as $dW(t)=\xi(t)dt$.  The signal $\xi(t)$ is called white noise, because all frequency components have the same expected value \citep{kampen_stochastic_1997}.

We apply this type of perturbation to system (\ref{eq:lin-system}), assuming that $R$ environmental factors act on the community.  These factors $r=1,...,R$ are modeled by mutually independent white-noise signals $dW_r(t)$. The effect of environmental factor~$r$ on species~$i$ is described by a coefficient $T_{ir}$.  Explicitly, writing $X_{i}(t) = N_{i}(t)-N_{i}^{*}$, the dynamics read $dX_{i} = \sum_{j=1}^{S} A_{ij} X_{j}(t)\,dt + \sum_{k=1}^{R} T_{ik}\,dW_{k}(t)$.  Using $\boldsymbol{X} = \left(X_{1},\ldots,X_\text{S}\right)^\top$, they take the compact vector form
\begin{equation}
 d\boldsymbol{X} = A\boldsymbol{X}\,dt + T\,d\boldsymbol{W}(t),
 \label{eq:stochastic-cont}
\end{equation}
with $\boldsymbol{W} = \left(W_{1},\ldots,W_{R}\right)^\top$ a collection of independent Brownian motions. Note that species abundances $X_{i}$ must now be seen as random variables.

We focus on the stationary state $\boldsymbol{X}_{*}$ of (\ref{eq:stochastic-cont}). It has Gaussian distribution centered at the equilibrium point.  The associated stationary covariance matrix $C_{*}=\mathbb{E}\left(\boldsymbol{X}_*{\boldsymbol{X}_*}^{\!\!\top}\right)$ is the solution of the Lyapunov equation \citep{arnold_stochastic_2013}, $\hat{A}\left(C_{*}\right) + \Sigma = 0$,  with $\Sigma = T\,T^\top$ and where the operator $\hat{A}$ acts on any matrix $C$ as $\hat{A}(C) = AC+CA^\top$.  With these notations, the stationary covariance matrix reads 
\begin{equation}
 C_{*} = -\hat{A}^{-1}(\Sigma). \label{eq:expression-of-C}
\end{equation}

As for the deterministic approach, to construct an intrinsic stability measure, we seek for the perturbation that will generate the largest response.  Concretely, we look for the perturbation covariance matrix $\Sigma$ that maximizes the norm of the response covariance matrix $C_{*}$.  There are many ways to assign a norm to a matrix. For our purposes, the most convenient choice turns out to be the Frobenius norm 
$\left\Vert \Sigma\right\Vert_\text{F} = \sqrt{\mbox{Tr}\,(\Sigma^{\top}\Sigma)}$, which amounts to viewing a matrix as a vector and taking its Euclidian norm (but see Appendix  \ref{sec:harte} for a different choice).  We then define stochastic variability with respect to the Frobenius norm  as the largest stationary covariance matrix over all normalized perturbations:
\begin{equation}
 \mathcal{V}_\text{S}
 = \sup_{\Sigma\geq 0;\,\left\Vert\Sigma\right\Vert_\text{F}=1}
   \left\Vert -\hat{A}^{-1}(\Sigma) \right\Vert_\text{F}. \label{eq:VS}
\end{equation}
Finally, we define intrinsic stochastic invariability $\mathcal{I}_\text{S}$ as $\mathcal{I}_\text{S} = 1/(2 \mathcal{V}_\text{S})$.  The use of the arbitrary factor $1/2$ in this definition will become clear below.

It turns out that the supremum in (\ref{eq:VS})  without the restriction $\Sigma\geq 0$, i.e., without requiring that $\Sigma$ is a covariance matrix, gives the same result \citep{watrous_notes_2005}.  Hence,
\begin{equation}
 \mathcal{V}_\text{S}
 = \sup_{\left\Vert\Sigma\right\Vert_\text{F}=1}
   \left\Vert -\hat{A}^{-1}(\Sigma) \right\Vert_\text{F}
 = \left\Vert \hat{A}^{-1}\right\Vert, \label{eq:V_S_spectral_norm}
\end{equation}
where the norm in the last expression is the spectral norm on the space of linear operators (cf. Appendix \ref{sec:Notations}). This gives an efficient way to evaluate stochastic invariability.  Indeed, one can see $\hat{A}$ as a larger matrix $A\otimes\mathbb{I}+\mathbb{I}\otimes A$, where $\mathbb{I}$ is the identity matrix and $\otimes$ stands for the tensor, or Kronecker, product.  To compute (\ref{eq:V_S_spectral_norm}), it suffices to evaluate the spectral norm of the inverse of $A\otimes\mathbb{I}+\mathbb{I}\otimes A$. The definition of $\mathcal{I}_\text{S}$ is illustrated on Fig.~\ref{fig:stochastic_variability}.

Stochastic invariability is defined with respect to white-noise perturbations.  However, we can see white noise as a specific representative of a broad class of disturbances that yield the same definition of invariability.  It is the set of  uncorrelated shocks constructed as random sequences of instantaneous displacements occurring randomly in time.  We make this claim precise in Appendix \ref{sec:shocks}.  This shows that stochastic variability can be interpreted more generally as the maximal system response to a persistent sequence of shocks, either of infinitesimal intensity but occurring at all times, or of finite intensity but occurring at random instants.  The latter  can be more appropriate to describe certain ecological perturbations, such as drought events, wildfires or disease outbreaks.

\begin{figure} 
\begin{center}
\includegraphics[scale=0.70]{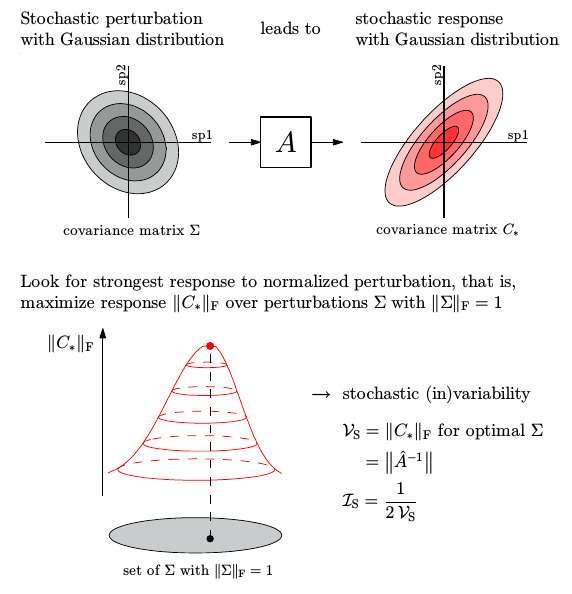}
\end{center}
\caption{
\footnotesize\label{fig:stochastic_variability}
\textbf{Definition of intrinsic stochastic invariability.}  (Top) White noise is applied to the system. It can be seen as a continuous successions of normally distributed (infinitesimal) shocks, characterized by a covariance matrix $\Sigma$. The response of the system to white noise is continuous, and  normally distributed in phase plane, with covariance matrix $C_*=-\hat{A}^{-1}(\Sigma)$. The variability of the response is measured as the Frobenius norm of this matrix.  (Bottom) To get an intrinsic measure, we look for the worst-case scenario, i.e., the input matrix $\Sigma$ generating the maximal variability.  However, we show that  the maximal response can be computed without having to solve an optimization problem. To get stochastic variability $\mathcal{V}_\text{S}$, it suffices to compute the spectral norm of $\hat{A}^{-1}$.  Stochastic invariability $\mathcal{I}_\text{S}$ is then defined as half of the inverse of $\mathcal{V}_\text{S}$.
}
\end{figure} 

\section{Intrinsic deterministic invariability}\label{sec:I_D}

In the previous section, and as if often done in theoretical studies, we modeled environmental perturbations as uncorrelated shocks. We now assume the converse, that is, we suppose the environment to be fully correlated in time. As extreme representatives of such disturbances we consider single-frequency periodic functions.  Based on this type of perturbations, we construct our last stability measure: intrinsic deterministic invariability $\mathcal{I}_\text{D}$.
We introduce deterministic environmental fluctuations $\boldsymbol{f}(t)$ in the linear dynamical as
$d\boldsymbol{x}/dt = A\boldsymbol{x}+\boldsymbol{f}(t)$.
We assume a single-frequency periodic forcing,
$\boldsymbol{f}(t) = \Re\left(e^{i\omega t}\boldsymbol{u}\right)
 = \cos(\omega t)\Re(\boldsymbol{u}) - \sin(\omega t)\Im(\boldsymbol{u})$,
where $\omega$ is the forcing frequency, $\boldsymbol{u}$ is the direction of the perturbation, and $\Re(\boldsymbol{u})$ (resp. $\Im(\boldsymbol{u})$) stands for the real part (resp. imaginary part) of the complex vector $\boldsymbol{u}$.  The perturbed dynamical system becomes
$d\boldsymbol{x}/dt = A\boldsymbol{x}+\Re\left(e^{i\omega t}\boldsymbol{u}\right)$.
The stationary system response reads
\begin{equation}
 \boldsymbol{x}(\omega, t) = \Re\left(e^{i\omega t}\boldsymbol{v}\right)
 \qquad \text{with} \qquad
 \boldsymbol{v} = \left(i\omega-A\right)^{-1}\boldsymbol{u}. \label{eq:w}
\end{equation}
We use the norm $\|\boldsymbol{v}\|$ as a measure of the system response to the forcing.  More explicitly, $\frac{1}{2}\|\boldsymbol{v}\|^2$ is the mean square distance to equilibrium,
$\frac{1}{2}\|\boldsymbol{v}\|^2 = \lim_{T\to\infty} \frac{1}{T} \int_0^T\|\boldsymbol{x}(\omega, t)\|^2 dt$.

For a given frequency $\omega$, the largest system response over all normalized perturbation vectors $\boldsymbol{u}$ is 
\begin{equation}
 \sup_{||\boldsymbol{u}||=1} \left\Vert \left(i\omega-A\right)^{-1}\boldsymbol{u} \right\Vert
 = \left\Vert \left(i\omega-A\right)^{-1} \right\Vert, \label{freq_response}
\end{equation}
where we have used the definition of the spectral norm of a matrix (see \ref{sec:Notations}).  We call $\left\Vert \left(i\omega-A\right)^{-1} \right\Vert$ the system's frequency response.  We look for the frequency $\omega$ that maximizes the frequency response, which we call the resonant frequency.  The frequency response at the resonant frequency,
\begin{equation}
 \mathcal{V}_\text{D}
 = \sup_{\omega\in\mathbb{R}}
   \left\Vert \left(i\omega-A\right)^{-1} \right\Vert, \label{eq:variability}
\end{equation}
is an intrinsic quantity, i.e., it depends only on the community matrix $A$ and represents the maximal amplitude gain over all single-frequency periodic signals.  We call $\mathcal{V}_\text{D}$ deterministic variability.  Its inverse defines an intrinsic stability measure, $\mathcal{I}_\text{D}=1/\mathcal{V}_\text{D}$, which we call intrinsic deterministic invariability.  The definition of $\mathcal{I}_\text{D}$ is illustrated in Fig.~\ref{fig:D_variability_def}. 

Quite generally, any deterministic signal can be developed into a sum of harmonic terms, or Fourier modes, of the form $\Re\left(e^{i\omega t}\boldsymbol{u}\right)$.  In the linear approximation, the system response to this general perturbation is equal to the sum of the system response to the single-frequency modes.  Then, it follows from a convexity argument that the perturbation generating the largest system response is a single-frequency mode.  Hence, when searching for the worst deterministic forcing, it suffices to consider single-frequency perturbations, as we have done in defining deterministic invariability.

We make this argument rigorous in Appendix \ref{sec:power_gain} and extend it to a large class of stationary perturbations. We relax the deterministic and periodic assumption on the environmental forcing, allowing the perturbation to be picked at random from a set of deterministic ones, that need not be periodic or even continuous. We only require that, on average (i.e., over all possible realizations), the perturbation is null, and that, again on average, its temporal autocorrelation is finite and stationary. In the language of signal analysis, such signals are called wide-sense stationary, and their maximal temporal autocorrelation defines their power.  
When comparing the output signal (the system response) to the input signal (the perturbation), we show that deterministic variability is the maximal power gain over all such stationary signals.  

As noted by Ripa and Ives \cite{ripa_food_2003}, the effect of environmental autocorrelation can be large and unintuitive.  An important feature of deterministic invariability is its ability to encompass $-$in a single number$-$ the potentiality of such effects (as long as the system remains in a vicinity of its equilibrium).

\begin{figure} 
\begin{center}
\includegraphics[scale=0.70]{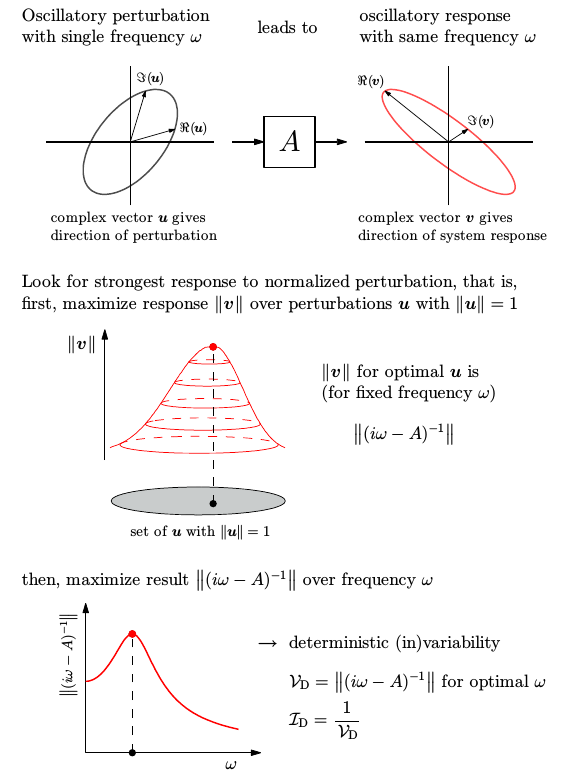}
\end{center}
\caption{
\footnotesize\label{fig:D_variability_def}
\textbf{Definition of intrinsic deterministic invariability.}  (Top) A periodic perturbation of frequency $\omega$ is applied to the system.  In phase space the perturbation defines an ellipse characterized by a complex vector $\boldsymbol{u}$. The system response oscillates a the same frequency along an ellipse characterized by the complex vector $\boldsymbol{v}=(i\omega-A)^{-1}\boldsymbol{u}$. The variability of the response is measured as the norm of $\boldsymbol{v}$.  (Middle) We then look for the maximal response over all input vectors $\boldsymbol{u}$, giving the frequency response. We get the frequency response without having to solve an optimization problem, by simply computing the spectral norm of $(i\omega-A)^{-1}$.  (Bottom) We search for the resonant frequency, giving deterministic variability as the maximal frequency response: $\mathcal{V}_\text{D}=\sup_\omega||(i\omega-A)^{-1}||$. Its inverse is deterministic invariability $\mathcal{I}_\text{D}$.
}
\end{figure} 

\section{Comparison of stability measures\label{sec:comparison of stability measures}}


In section \ref{sec:resiliences} we introduced two commonly used measures of local stability, asymptotic and initial resilience ($\mathcal{R}_{\infty}$ and $\mathcal{R}_{0}$). In sections \ref{sec:stoch invariability} and \ref{sec:I_D} we introduced stochastic and deterministic invariability ($\mathcal{I}_\text{S}$ and $\mathcal{I}_\text{D}$) and explained why they are more closely related to empirical measures $-$see table \ref{tab:summary}. Here we establish general relationships between resilience and invariability measures.

We start by considering the simplest case: one-dimensional stable equilibrium.  In the vicinity of the equilibrium, the dynamics read
$
dx/dt = -ax,
$
with $a>0$.  Note that, in this case, the matrix $A$ is scalar, $A=-a$.
To compute resilience measures $\mathcal{R}_{\infty}$ and $\mathcal{R}_{0}$, we use that, starting from $x_{0}$, the variable $x$ evolves as $x(t) = x_{0}\,e^{-at}$.  This implies that 
\[
\mathcal{R}_{\infty}=\mathcal{R}_{0}=a.                                                                                                                                                                                          
\]
To compute stochastic invariability $\mathcal{I}_\text{S}$, we must solve the Lyapunov equation (\ref{eq:expression-of-C}), with $C_{*}=\mathbb{E}(X^2_*)$ the variance of the stationary state $X_*$ associated to a stochastic forcing $\sigma^2\,dW(t)$. It simply reads    
$
(-a)C_{*}+C_{*}(-a)+\sigma^2=0,
$
so that $C_{*}=\sigma^2/(2a)$. For a normalized noise variance this gives $\mathcal{V}_\text{S} = 1/(2a)$ and 
\[
\mathcal{I}_\text{S} = 1/(2\mathcal{V}_\text{S}) = a.
\]
Finally, to compute deterministic invariability $\mathcal{I}_\text{D}$, we must solve (\ref{eq:variability}).  Here this formula takes the simple form of
\[
\mathcal{V}_\text{D}
= \sup_{\omega\in\mathbb{R}} \left|(i\omega+a)^{-1}\right|
= \sup_{\omega\in\mathbb{R}} (\omega^{2}+a^{2})^{-1/2}
= a^{-1}
\Rightarrow \mathcal{I}_\text{D} = a.
\]
Note that the maximal frequency response is attained at $\omega=0$, indicating that the perturbation with largest effect is a press perturbation, i.e., a perturbation that is constant in time.
Hence, we find that for one-dimensional dynamics, the four stability measures coincide.  This result suggest that, although at first sight their definitions are unrelated, the values of the stability measures can be expected to satisfy general relationships.  Remark that, as a corollary, we have established that the stability measures are expressed in the same units (reciprocal time), so that their values can be compared.  Also, note that this simple computation justifies the presence of the factor $1/2$ in the definition of $\mathcal{I}_\text{S}$.  Without this factor, stochastic invariability would be twice as large as the other stability measures.

Equality remains for normal matrices, such as symmetric ($A_{ij}=A_{ji}$) and skew-symmetric ($A_{ij}=-A_{ji},\;i \neq j$) matrices.  The case  $A_{ij}=A_{ji}<0$ corresponds to symmetric competitive interactions, i.e., species $i$ affects species $j$ as much as species $j$ affects species $i$.  Symmetric community matrices have been considered in previous stability studies \citep{ives_stability_1999,loreau_species_2008}.  The case $A_{ij} = A_{ji} > 0$ corresponds to symmetric mutualistic interactions, considered for instance by \cite{bastolla_architecture_2009}.
Finally, skew symmetric matrices corresponds to symmetric antagonistic interactions (e.g., predator-prey or host-parasitoid interactions)  in which the positive effect of prey species $j$ on predator species $i$ is equal (in magnitude) to the negative effect of predator species $i$ on prey species $j$.  Such matrices have been considered in theoretical studies of food webs \citep{moore_energetic_2012}. 

The equality of the stability measures in the normal case can be understood intuitively (but see \ref{sec:Notations} where we explain why normal matrices cannot be relatively reactive).  For normal matrices, the eigenvectors are orthogonal and can thus be seen as co-operating forces dragging trajectories back to equilibrium. Consequently, dynamics along the direction spanned by the ``slowest'' eigenvector (associated to asymptotic resilience) contain all of the stability limiting features, such as most reactive, most sensitive, and also largest  associated response direction. In other words, when looking for intrinsic stability measures, one can simply reduce a normal system  to a one-dimensional one along the direction spanned by its dominant eigenvector. Since stability measures coincide for one-dimensional systems, they will coincide in the normal case.  

For non-normal matrices, equality of stability measures no longer holds in general.  However, the measures are always ordered according to
\begin{equation}
\mathcal{R}_0 \leq \mathcal{I}_\text{S} \leq \mathcal{I}_\text{D} \leq \mathcal{R}_\infty,
\label{eq:dynamical_stab_ordering}
\end{equation}
as proved in \ref{sec:proof1} and illustrated in Fig.~\ref{fig:random}. It means that, for any given system, initial resilience gives the lowest value of stability, whereas asymptotic resilience always attributes the highest. Notice that the general ordering (\ref{eq:dynamical_stab_ordering}) collapses into an equality if $\mathcal{R}_0=\mathcal{R}_\infty$, i.e., whenever the community matrix is not relatively reactive. 

To illustrate the potential differences between measures, and to gain insight into the mechanisms that can cause these differences, we represent on Fig.~\ref{fig:Divergence-of-stab-measures} the behavior of stability measures for two sequences of community matrices gradually departing from normality.  On the left panel are represented the stability measures of a sequence of competitive communities near equilibrium. Species 2 has negative impact on species 1, yet species 1 has no effect on species 2 (an amensalistic interaction). As the asymmetry of the interaction grows, asymptotic resilience remains unchanged while other measures decrease. For large enough asymmetry, asymptotic resilience is one order of magnitude larger than invariability measures. On the right panel of Fig.~\ref{fig:Divergence-of-stab-measures} the matrices model a consumer-resource system near equilibrium, with the consumer depleting, for a fixed benefit, an increasing amount of resource, while increasing its tendency to return to equilibrium. In this artificial example, the stability trend along the gradient appears to be ambiguous. Indeed, as the interaction asymmetry grows, asymptotic resilience increases while other measures indicate a sharp loss of stability.  We notice in both examples that systems become reactive (i.e., $\mathcal{R}_0<0$) while remaining relatively stable with respect to other measures. 

\begin{table} 
\begin{center}
\begin{threeparttable}
\small\begin{tabular}{lcr}
Stability measure &  Interpretation & Formula \\[-3pt]
\hline \\[-12pt]

Asymptotic   & Slowest asympt. rate of return   &  $\mathcal{R}_\infty = -\Re(\lambda_\text{dom}(A))$ \tnote{(a)}                               \\[-3pt]
resilience   & to equilibrium after a shock.    &                                                                                   \\

Deterministic& Inverse of maximal response      & $\mathcal{I}_\text{D} = (\sup_\omega ||(i\omega-A)^{-1}||)^{-1} $ \tnote{(b)}                  \\[-3pt]
invariability& amplitude to periodic forcing.   &                                                                                    \\

Stochastic   & Inverse of maximal response      &  $\mathcal{I}_\text{S} = \frac{1}{2} ||-\hat{A}^{-1}||^{-1} $ \tnote{(c)}                      \\[-3pt]
invariability& variance to white-noise.         &                                                                                    \\

Initial      & Slowest initial rate of return   &  $\mathcal{R}_0 = -\frac{1}{2} \lambda_\text{dom}(A+A^\top)$ \tnote{(d)}                      \\[-3pt]
resilience   & to equilibrium after a shock.    &                                                                                    \\
\hline
\end{tabular}

\begin{tablenotes}
\footnotesize
\item[(a)] $\lambda_\text{dom}$ is the eigenvalue of community matrix $A$ with maximal real part $\Re(\lambda_\text{dom})$.\\ 
\item[(b)] $i$ is the imaginary unit and $\omega\geq 0$. $||\cdot ||$ is the spectral norm of matrices.\\
\item[(c)] $\hat{A}=A\otimes\mathbb{I}+\mathbb{I} \otimes A$ where $\mathbb{I}$ is the identity matrix; $\otimes$ is the Kronecker product.\\
\item[(d)] $A^{\top}$ is the transpose of $A$.
\end{tablenotes}
\end{threeparttable}

\end{center}
\caption{
\footnotesize\label{tab:summary}
\textbf{Computing intrinsic measures of stability}. The two resilience measures are well known, while the two invariability measures are new. All four measures are defined with respect to the worst system response over different types of perturbations. They are expressed in the same units (reciprocal time) and can be computed directly from the community matrix $A$. 
}

\end{table} 
\begin{figure} 
\begin{center}
\includegraphics[scale=0.70]{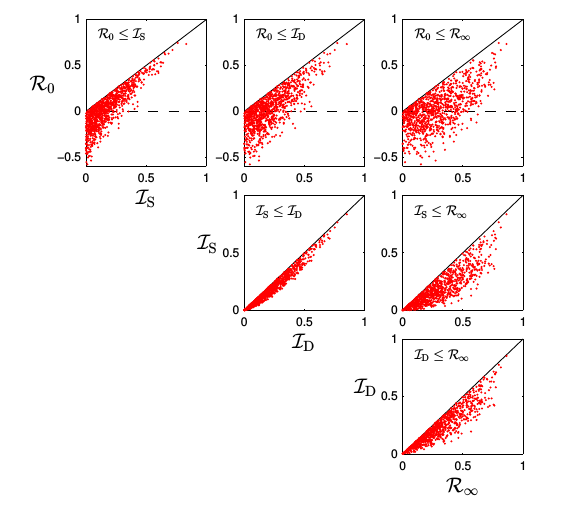}
\end{center}
\caption{
\footnotesize\label{fig:random}
\textbf{Illustration of the general stability ordering (\ref{eq:dynamical_stab_ordering})} on random matrices of dimension $S=3$.  The diagonal elements were drawn from a uniform distribution over $[-1,0]$ while off-diagonal elements were drawn from a normal distribution of mean $0$ and variance $1$. With this procedure $63\%$ of the matrices generated were stable.  We plotted all four stability measures against each other for 1000 stable matrices (each red dot corresponds to a stable matrix).  Dots lying below the full black lines correspond to relatively reactive matrices; dots lying below the dashed black line (top row panels) correspond to reactive matrices. 
}
\end{figure} 

\begin{figure} 
\begin{center}
\includegraphics[scale=0.50]{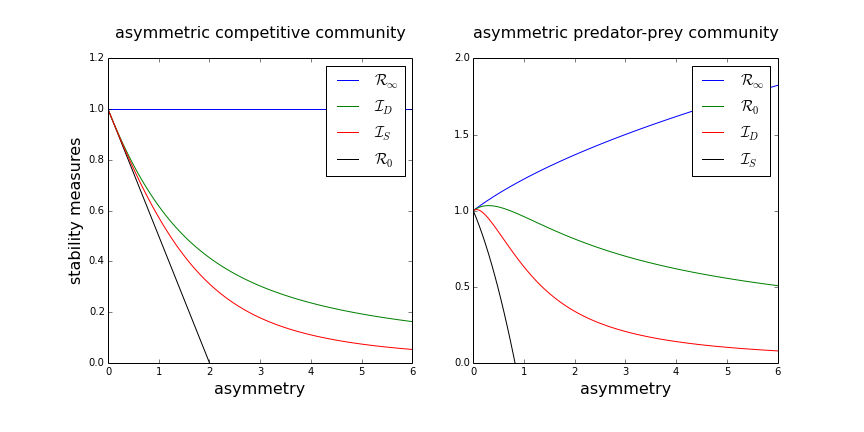}
\end{center}
\newsavebox{\smlmatb} 
\savebox{\smlmatb}{$A = \big( \begin{smallmatrix} -1 & -\rho \\ 0 & -1 \end{smallmatrix} \big)$}
\newsavebox{\smlmatc} 
\savebox{\smlmatc}{$A = \big( \begin{smallmatrix} -1 & -(1+\rho)^2 \\ 1 & -\sqrt{1+\rho} \end{smallmatrix} \big)$}
\caption{
\footnotesize\label{fig:Divergence-of-stab-measures}
\textbf{Branching of stability measures} as interaction asymmetry increases.  Left panel: the community matrix \usebox{\smlmatb} models the dynamics of two species near equilibrium, with one of them having negative impact on the other.  As interaction asymmetry $-$parametrized by $\rho\geq0-$ grows, asymptotic resilience remains unchanged while the other measures drop.  At $\rho=6$ asymptotic resilience is one order of magnitude larger than stochastic invariability.  Right panel: the matrix \usebox{\smlmatc} models a consumer-resource system near equilibrium, with the consumer depleting, for a fixed benefit, a growing  amount of resource.  As asymmetry grows, asymptotic resilience increases while other measures show a loss of stability. We observe that in both examples, the system becomes reactive ($\mathcal{R}_0<0$) at low asymmetry.
}
\end{figure} 

All measures can be expressed as characteristics of the transient regime following a shock and leading back to equilibrium. This claim might seem surprising, as invariability measures are defined with respect to persistent disturbances and not to pulse perturbations. To reveal this link, notice that an external forcing, either deterministic or stochastic, constantly pushes the system away from equilibrium and can be seen as a sequence of pulse perturbations. The system stationary response is, at each time, the sum of the responses to all past perturbations. Hence, invariability measures are sensitive to short-term responses, long-term ones, and all in between; in other words, to the whole transient regime leading back to equilibrium. The envelope of the distance to equilibrium of all trajectories (associated to all normalized shocks that can be applied to the system) defines the so-called amplification envelope (see Fig.~\ref{fig:resilience}). This suggests a link between invariability measures and the integral of the amplification envelope (see \ref{sec:harte}). By definition, initial and asymptotic resilience relate to the head and tail of this envelope. When the transient is completely determined by asymptotic resilience, stability measures coincide, but in general, neither initial nor asymptotic resilience fully characterize the transient, hence stability measures differ.

The above reasoning also sheds light on the reasons why measures are ordered according to (\ref{eq:dynamical_stab_ordering}). First of all, to understand why $\mathcal{R}_0$ is smaller than $\mathcal{R}_\infty$, it suffices to notice that the initial decay of a displacement along the dominant eigenvector of the community matrix is precisely given by asymptotic resilience. Since initial resilience corresponds to the worst-case scenario, it can only be smaller. A similar argument applies for other measures, by considering perturbations along the dominant eigenvector.  In the case of persistent disturbances, the system stationary response is, at each time, the sum of the responses to all past displacements. All these displacements are dragged towards the dominant eigenvalue, thus their absorption rate changes until reaching the decay rate predicted by $\mathcal{R}_\infty$. The resultant response is thus always smaller than if all displacements were only absorbed at the minimal initial rate $\mathcal{R}_0$, which implies that variabilities are smaller than the inverse of $\mathcal{R}_0$. This explains why invariability measures are framed by resilience measures. 

The fact that $\mathcal{I}_\text{S} \leq \mathcal{I}_\text{D}$ is less intuitive, and specific to the normalization choices we have made.  We stress that it should not be interpreted as if uncorrelated shocks generate larger variability than autocorrelated fluctuations. The white-noise normalization chosen to define stochastic variability focuses on the variance of the displacements induced by shocks and not on the variance of the signal itself, which is infinite. Yet, uncorrelated shocks generate a system response with finite variance. In terms of gain of variance uncorrelated shocks are thus far less efficient in exciting the system than autocorrelated signals.

Finally, it is worth noting that there is no biological reason why community matrices representing biological systems should be normal, or simply not relatively reactive. In fact, it has been established that many natural systems are reactive \citep{neubert_alternatives_1997, neubert_reactivity_2004, neubert_detecting_2009}. Since reactivity is a stronger condition than relative reactivity, which in turn is a stronger condition than non-normality, this suggest that most natural systems are non-normal and relatively reactive. We have explained that, in this case, stability measures can largely differ. This advocates for a more integrative approach to local stability, that does not simply focus on asymptotic resilience.

\section{Discussion}

Ecological stability theory is largely based on the response of ecosystems to single pulse perturbations, or shocks.  This corresponds to the  mathematical definition of resilience $-$either initial or asymptotic$-$ derived from the theory of linear dynamical systems \citep{may_stability_1973b, neubert_alternatives_1997}. 
Resilience measures are rarely used in empirical studies because, amongst other reasons, environmental perturbations occur all the time.  Instead, empirical stability is typically expressed as the inverse of temporal variability, directly measured on time series data. 
This has inspired theoretical studies to consider variability-based stability measures \citep{ives_stability_1999, lehman_biodiversity_2000, loreau_biodiversity_2013}, 
yet these approaches remain largely disconnected from the large body of resilience-based stability theory. Indeed,
several obstacles stand in the way to establish a clear link between empirically motivated and purely theoretical views on stability (see Table~\ref{tab:comptheory}):
\begin{itemize}
\item Variability is caused by persistent environmental disturbances, while resilience theory considers single-shock perturbations.
\item In previous studies, variability depends on the intensity and direction of environmental perturbations, whereas resilience measures do not depend on perturbation features.
\item In previous studies, variability is measured on a specific variable (like total biomass), whereas resilience measures are defined independently of a choice of observed variable.
\end{itemize}

To narrow the gap between variability- and resilience-based stability,  we focused on the fundamental discrepancy (i).  We introduced two new variability measures that originate from the same theoretical setting as resilience, surmounting discrepancies (ii) and (iii). They are constructed from the maximal response to two distinct types of persistent disturbances:  shocks occurring without temporal correlation and stationary perturbations with long-term correlations. We called them \emph{intrinsic} measures of invariability, to emphasize that they only depend on the intrinsic ecosystem dynamics, i.e., on the community matrix.

Because resilience measures are also intrinsic, stochastic and deterministic invariability allowed for a general comparison of stability measures.  In doing so, we found that invariability measures are intermediate  between initial and asymptotic resilience. We explained this result as a consequence of the fact that, although defined with respect to persistent perturbations, invariability relates to the whole transient regime following a shock, while resilience only focuses on specific short-term and asymptotically long-term responses.

\begin{table} 
\begin{center}
\small\begin{tabular}{rlccc}
 & & asympt/initial & existing theory & intrinsic \\[-2pt]
 & & resilience & of variability & invariability \\[6pt]
 \hline \\[-12pt]
 (i) & type of perturbation & pulse & persistent & persistent \\
 (ii) & depends on perturbation features & no & yes & no \\
 (iii) & associated to an observed variable & no & yes & no
\end{tabular}
\end{center}
\caption{
\footnotesize\label{tab:comptheory}
\textbf{Intrinsic measures of invariability as a stepping stone} between purely theoretical and empirically motivated notions of stability.  Different classes of stability measures are compared based on whether the system response to pulse or persistent perturbations is considered (i); whether the measures depend on the intensity and direction of the applied perturbation (ii); and whether the system response is measured on a specific variable $-$such as total biomass; (iii). 
A large part of ecological theory uses intrinsic stability measures associated to pulse perturbations (like asymptotic and initial resilience; column ``asympt/initial resilience'').  There also exists a rather disconnected theory of ecological variability, based on non-intrinsic variability measures and persistent perturbations (column ``existing theory of variability'').  In this paper we bridge these two approaches by introducing intrinsic invariability measures (column ``intrinsic invariability'').
}
\end{table} 

\begin{figure} 
\begin{center}
\includegraphics[scale=0.70]{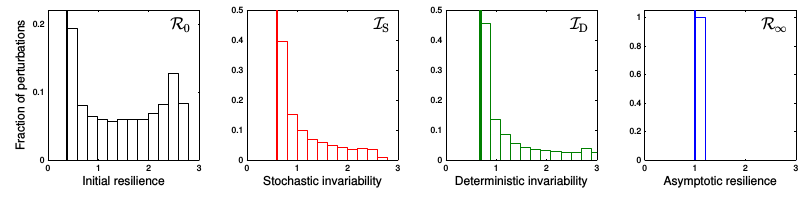}
\end{center}
\newsavebox{\smlmatd} 
\savebox{\smlmatd}{$A = \big( \begin{smallmatrix} -1 & 2 \\ 0 & -2 \end{smallmatrix} \big)$} 
\caption{
\footnotesize\label{fig:genericity}
\textbf{Worst-case vs generic perturbations.} For the community matrix \usebox{\smlmatd},  we investigate the system response to random perturbations and compare with the worst-case scenario.  In each panel, the vertical thick line represents the intrinsic stability measure, i.e., the system response corresponding to the worst-case scenario, over the defining set of perturbations associated to that measure. The histogram represents the distribution of the system response to a perturbation randomly sampled from the defining set.  For resilience measures (left- and rightmost panel), we generated 1000 initial displacements, drawn uniformly on the unit sphere around equilibrium.  For stochastic invariability (second panel), we generated 1000 random matrices $T$ of independent Gaussian variables, and constructed white-noise covariance matrices by normalizing $\Sigma=T\,T^\top$ (Wishart distribution).  For deterministic invariability (third panel), we generated 1000 random press perturbations (i.e., of frequency $\omega=0$, which is the resonant frequency of this system), uniformly distributed on the sphere. Only asymptotic resilience is generic, as all asymptotic return rates equal $\mathcal{R}_\infty$. For other measures, the worst-case response can be very different from the response to a particular perturbation.  
}
\end{figure} 

While this result establishes a fundamental link between variability and resilience, it does not make the connection with empirical and (empirically motivated) variability.  In particular, empirically measured variability depends on a specific environmental perturbation acting on the system, while intrinsic measures of invariability and resilience inform on the worst-case system response over entire sets of perturbations.  In Fig.~\ref{fig:genericity} we illustrate the fact that only asymptotic resilience represents a generic response to pulse perturbations. In other words, for all measures but asymptotic resilience, one must expect, in general, the worst-case response to be very different from the response to a particular perturbation.

This indicates that in a context where the nature and direction of environmental disturbances are expected to change,  focusing on a specific perturbation direction to assess stability can be misleading in terms of informing on the potential threats to ecosystems. 
Intrinsic measures, although not directly observable on data, thus contain important stability information that empirical measures might miss. This statement should however be taken with a note of caution: invariability and resilience measures do not always relate to realistic $-$or observable$-$ perturbation scenarios.  Indeed, environmental disturbances generally do not affect species abundances directly, and species populations respond differently, depending on their functional traits and abundances.  This will associate different intensities to different directions of perturbations, hence potentially restricting the response possibilities.  For instance, it seems natural to assume that a perturbation affecting an abundant species is stronger than a perturbation affecting a less abundant one.  We will investigate the consequences of the scaling of perturbation intensity by abundance in future work.  It is interesting to note already that, while this perturbation scaling will not affect resilience (hence the resilience of an ecosystem can potentially be determined by the response of rare $-$even unobserved$-$ species), it can qualitatively modify stability patterns as predicted by invariability, suggesting that invariability could be a more flexible stability notion than resilience. 

If invariability measures defined in this article constitute a stepping stone, the gap between theoretical and empirical stability remains far from being bridged.  In this regard, the underlying equilibrium assumption constitutes arguably the most serious obstacle.  This assumption is rarely satisfying to approach real ecological systems, which can sometimes display much more complex stationary dynamics (see \citep{beninca_species_2015} for a particularly convincing example); or can simply not be in a stationary regime, due to recent environmental change.  However, it should be noted that the framework developed in this paper can, to some extent, be generalized.  For example, intrinsic invariability measures can be transposed to discrete-time dynamical systems, which are important in their own right, but also to deal with periodic ecological dynamics in continuous time. In this case, the equilibrium assumption is not relevant, but can remain valid after making a stroboscopic section of trajectories, using the so-called Poincar\'e map \citep[chap. 11]{dieckmann_geometry_2000}. This illustrates that the results we have obtained for a restricted theoretical setting can have wider applicability. 

Ecosystems across the planet face a myriad of environmental stress, and threats. In a context of global environmental crisis, there is a glaring need for conceptual tools to better understand the complex dynamics of nature.
For near-equilibrium dynamics, we illustrated that focusing on the dominant direction of return to equilibrium can be misleading, and that a more integrative approach is possible, providing unintuitive insight on systems response to potential perturbations.
If this idealistic setting can be used in other cases such as periodic dynamics, it should also serve as a reference point to move towards more realistic ecosystem models.  The fact that there was unexploited richness in such a simple setting suggests that ecological stability theory can be significantly improved without having to resort, yet, to overly complicated mathematical formalism.

\acknowledgements{
We warmly thank Shaopeng Wang, Claire de Mazancourt and Jos\'e Montoya for numerous and fruitful discussions, that greatly helped the redaction of this article. J-F. A. is particularly grateful to Shaopeng Wang and N\'uria Galiana for their enthusiastic support and critical reading of preliminary drafts. 
This work was supported by the TULIP Laboratory of Excellence (ANR-10-LABX-41) and by the AnaEE France project (ANR-11-INBS-0001).
}

\bibliography{stabtheorbiol}

\appendix
\section{Mathematical notations\label{sec:Notations}}

In this article a vector $\boldsymbol{u}$ is seen as a column whereas its transpose $\boldsymbol{u}^{\top}$ is a row.  For complex vectors the dual is taken as  $\boldsymbol{u}^{*}=\overline{\boldsymbol{u}^{\top}}$, the conjugate transpose, so that $\boldsymbol{u}^{*}\boldsymbol{v}=\left\langle \boldsymbol{u},\boldsymbol{v}\right\rangle$ is the Hermitian scalar product (Notice that reversing the order gives $\boldsymbol{u}\boldsymbol{v}^{*}$, a rank-one matrix).  The associated (Euclidean) norm is then 
\begin{equation}
\boldsymbol{u}^{*}\boldsymbol{u}=\left\langle \boldsymbol{u},\boldsymbol{u}\right\rangle =\left\Vert \boldsymbol{u}\right\Vert ^{2}\label{eq:scalar}
\end{equation}
This norm on vectors induces a norm on matrices $B$, called the spectral norm: 
\begin{equation}
\left\Vert B\right\Vert =\sup_{\left\Vert \boldsymbol{u}\right\Vert =1}\left\Vert B\boldsymbol{u}\right\Vert \label{eq:2norm}
\end{equation}
The dominant eigenvalue of a given matrix $B$ (i.e., with largest real part) is denoted $\lambda_{\mathrm{dom}}\left(B\right)$. If $B$ is a complex matrix, its adjoint is given by $B^*$, the conjugate transpose.  In particular, it holds that $\left\Vert B\right\Vert ^{2}=\lambda_{\mathrm{dom}}\left(B^*B\right)$, which justifies the term ``spectral norm'' for $||B||$.

The space of matrices $\mathbb{C}^{S\times S}$ is endowed with an inner product structure $\left\langle A,B\right\rangle =\mbox{Tr}\left(A^{*}B\right)$, where $\mbox{Tr}$ stands for the trace, giving the sum of diagonal elements of square matrices. The Schatten norms reflect this structure as $\left\Vert B\right\Vert _{p}=\mbox{Tr}\left(\left|B\right|^{p}\right)^{\frac{1}{p}}$ with $\left|B\right|=\sqrt{B^{*}B}$.  In this article we only consider $p=1$, the trace norm, $p=2$, the Frobenius norm, compatible with the inner product, and $p=\infty$, the spectral norm.
We also endow the space of linear operators $\mathcal{L}\left(\mathbb{C}^{S\times S} \right)$ $-$acting on matrices$-$ with a norm, induced by the Frobenius norm, as done by \citep{watrous_notes_2005}: 
\begin{equation}
\forall\mathcal{B}\in\mathcal{L}\left(\mathbb{C}^{S\times S}\right),\;\left\Vert \mathcal{B}\right\Vert =\sup_{\left\Vert U\right\Vert _\text{F}=1}\left\Vert \mathcal{B}U\right\Vert _\text{F}\label{eq:lifted-shatten}
\end{equation}
An important remark is that the lifted norm $\left\Vert\mathcal{B}\right\Vert$ coincides with the spectral norm on the space of linear operators $\mathcal{L}\left(\mathbb{C}^{S\times S} \right)$. 

A matrix $A$ is said to be normal if it commutes with its adjoint $A^*$ \citep{trefethen_spectra_2005}.  Hence $A$ and $A^*$ have the same eigenvectors, associated to conjugate eigenvalues.  This implies that the dominant eigenvalue of $(A+A^*)/2$ is equal to the real part of the dominant eigenvalue of $A$.  In particular if $\mathcal{R}_X, \; (X=0,\infty),$ stands for the two resilience measures defined in the main text, we get that $\mathcal{R}_0=\mathcal{R}_\infty$. Hence normal matrices are never relatively reactive.  However, the set of relatively reactive matrices is smaller than the one of non-normal matrices, as can be proved by considering 
\begin{equation}
A=\left(\begin{array}{ccc}
  -1 & 0 & 0 \\
  0 & -2 & 0 \\
  0 & \epsilon & -2
\end{array}\right)\label{eq:example-nonnormal},
\end{equation}
this matrix is not normal for $\epsilon \neq 0$, yet not relatively reactive either, as long as $\left|\epsilon\right| \leq 2$. In this example, the eigenvector associated to the dominant eigenvalue $-1$ is orthogonal to the  subspace associated to the (degenerate) sub-dominant eigenvalue $-2$.  The non-normality of the restriction of $A$ to that subspace needs to be sufficiently pronounced to have a significant dynamical impact on the associated linear system $d\boldsymbol{x}/dt=A\boldsymbol{x}$.

\section{Uncorrelated shocks, white noise, and stochastic variability\label{sec:shocks}}

\subsection{Stochastic variability is the maximal system response to uncorrelated shocks}

We define a class of stochastic perturbations that yields the same definition of intrinsic stochastic invariability than the one associated to white noise $-$section \ref{sec:stoch invariability} in the main text.  This class consists of uncorrelated shocks, occurring at random instants. They take the form of a random sequence of pulse perturbations:
\[
\boldsymbol{\xi}_{\lambda,\Sigma}(t)=\sum_{k}\boldsymbol{u}_{k}\delta\left(t-t_{k}\right)
\]
where vectors $\boldsymbol{u}_{k}\in\mathbb{R}^{S}$ and times $t_{k}$ are independent random variables, parametrized by correlation matrix $\Sigma$ and intensity $\lambda$.  More precisely, the vectors $\boldsymbol{u}_{k}$ are independent, identically distributed variables drawn from a distribution of zero mean: $\mathbb{E} \boldsymbol{u}=0$ and correlation matrix $\Sigma=\mathbb{E}\boldsymbol{u}_{k}^{\top}\boldsymbol{u}_{k}$.  They represent the amplitude and direction of the displacement occurring at time $t_{k}$. The times $t_{k}$ are generated by a Poisson process with intensity $\lambda$.  They represent the time coordinate of perturbation events. The average number of events in a time period of length $T$ is $\lambda T$. We normalize the intensity $\lambda$ and the matrix $\Sigma$ such that $\lambda\left\Vert \Sigma\right\Vert _{\text{F}}=1$, where $\left\Vert \cdot \right\Vert _{\text{F}}$ stands for the Frobenius norm of matrices (\ref{sec:Notations}).  This can be interpreted as a trade-off between frequency of events and amplitude of the associated displacement.  The effect of such disturbances on a
community near equilibrium is modeled through the following linear dynamical system
\[
d\boldsymbol{x}/dt=A\boldsymbol{x}+\boldsymbol{\xi}_{\lambda,\Sigma}(t).
\]
The stationary response of the community can then be written explicitly as
\[
\boldsymbol{x}(t)=\sum_{k|t_{k}<t}e^{(t-t_{k})A}\boldsymbol{u}_{k}
\]
The mean system response is zero, and the associated covariance matrix reads
\[
C_{*}=\mathbb{E}_{\left\{ \boldsymbol{u}_{k},t_{k}\right\} }\boldsymbol{x}(t)\boldsymbol{x}(t)^{\top}=\mathbb{E}_{\left\{ t_{k}\right\} }\sum_{k|t_{k}<t}e^{(t-t_{k})A}\Sigma e^{(t-t_{k})A^{\top}}
\]
\[
=\int_{_{-\infty}}^{t}e^{(t-s)A}\Sigma e^{(t-s)A^{\top}}\lambda ds=\int_{0}^{\infty}e^{sA}\lambda\Sigma e^{sA^{\top}}ds
\]
where, in the last term, we recognize the unique solution to the Lyapunov equation $AC_{*}+C_{*}A^{\top}=\lambda\Sigma$ \citep{arnold_stochastic_2013}, so that $C_{*}=\hat{A}^{-1}\left(\lambda\Sigma\right)$, where $\hat{A}(\cdot)=A\cdot+\cdot A^{\top}$ is the lifted linear operator defined in section \ref{sec:stoch invariability}. Hence, using the Frobenius norm, the maximal response over all normalized uncorrelated shocks, is
\[
\sup_{\lambda\left\Vert \Sigma\right\Vert _{\text{F}}=1} \left\Vert \hat{A}^{-1}\left(\lambda\Sigma\right)\right\Vert _{\text{F}}=\left\Vert \hat{A}^{-1}\right\Vert =\mathcal{V}_\text{S}
\]
that is, stochastic variability as defined in section \ref{sec:stoch invariability}.  Stochastic variability can thus be interpreted more generally as the maximal system response to uncorrelated disturbances, either infinitesimal
shocks occurring at all times (that is, white noise) or finite shocks occurring at random instants (that is, the above class of uncorrelated shocks).  In fact, we now prove that white noise is a specific representative of the class of uncorrelated shocks.

\subsection{White noise as a limit case of uncorrelated shocks}

In section \ref{sec:stoch invariability} we defined white noise as the derivative of the Brownian motion. We shall use this definition to prove that white noise is a limit case of uncorrelated shocks. For the sake of simplicity, we limit our attention to the one-dimensional case.

Consider time instants $t_k$ generated by a Poisson process with rate $\lambda$.  Consider independent random variables $u_k$ with identical distribution.  This distribution is not necessarily normal;  we only assume that it has zero mean and finite variance $\sigma^2$.  The associated uncorrelated shocks read
\[
 \xi_{\lambda,\sigma^2}(t) = \sum_k u_k\,\delta(t-t_k)
\]
We claim that the joint limit $\lambda\to\infty$, $\sigma^2\to 0$ with $\lambda \sigma^2 = 1$ yields the one-dimensional white-noise signal.  We prove this by defining, for $s_1<s_2$, the random variable 
\[
 W_{\lambda,\sigma^2}(s_1,s_2) = \int_{s_1}^{s_2} \xi_{\lambda,\sigma^2}(s)\,ds = \sum_{k | s_1<t_k<s_2} u_k
\]
a sum of independent and identically distributed random variables.  The number of terms in the sum is Poisson distributed with mean $\lambda (s_2-s_1)$.  The mean of $W_{\lambda,\sigma^2}(s_1,s_2)$ is zero and its variance is $\lambda (s_2-s_1)\,\sigma^2$. Moreover, $W_{\lambda,\sigma^2}(s_1,s_2)$ is independent of $W_{\lambda,\sigma^2}(s_3,s_4)$ for $s_1<s_2<s_3<s_4$.

For large $\lambda$ the number of terms in the sum is typically large, and we can apply a generalized central limit theorem \citep{blum_central_1963, renyi_central_1963}.  We find that 
\[
\lim_{\lambda \rightarrow \infty} W_{\lambda,\lambda^{-1}}(s_1,s_2)  = \mathcal{N}(0,s_2-s_1)
\] 
where $\mathcal{N}$ stands for the normal distribution. This indicates that $W_{\lambda,\lambda^{-1}}(0,t)$ converges to the Brownian motion and thus that $\xi_{\lambda,\lambda^{-1}}(t)$ converges to white noise.

\section{Deterministic variability and stationary perturbations\label{sec:power_gain}}

Consider $\boldsymbol{u}:\Omega\times\mathbb{R}_{t}\rightarrow\mathbb{C}^{S}$
a (wide-sense) stationary signal, defined on a probability space $\Omega\ni\lambda$. This is the input.
We assume zero mean: 
$
\mathbb{E}\boldsymbol{u}(t)=\int\boldsymbol{u}(\lambda,t)d\lambda=0
$
and finite power at any given time:
\[
\mathbb{E}\left\Vert \boldsymbol{u}(t)\right\Vert ^{2}=\int\left\Vert \boldsymbol{u}(\lambda,t)\boldsymbol{u}(\lambda,t)^{*}\right\Vert _{1}d\lambda
\]
Here $\left\Vert \cdot\right\Vert _{p}$ stands for the Schatten norm of matrices ($p=1$ is trace norm, $p=2$ is Frobenius norm and $p=\infty$ is spectral norm).  Writing 
\[
\gamma_{\text{in}}\left(t,\tau\right)=\mathbb{E}\boldsymbol{u}(t)\boldsymbol{u}(t-\tau)^{*}
\]
for the signal autocorrelation matrix, we see that the power of the input is given by $\left\Vert \gamma_{\text{in}}(t,0)\right\Vert _{1}$.  Wide-sense stationarity means that the autocorrelation is independent of $t\in\mathbb{R}$; we therefore drop that variable from now on.  Consider now, for any realization $\lambda\in\Omega$, the dynamical system
\[
d\boldsymbol{x}=A\boldsymbol{x}dt+\boldsymbol{u}(\lambda,t)dt
\]
where $A$ is a stable matrix. The system's stationary state reads, for any $\lambda\in\Omega$
\[
\boldsymbol{x}(t,\lambda)=\int_{-\infty}^{t}e^{(t-s)A}\boldsymbol{u}(\lambda,s)ds=\int_{t}^{\infty}e^{(s-t)A}\boldsymbol{u}(\lambda,-s)ds
\]
This defines (one realization) of the output signal. We wish to estimate the power of the output. By definition it is given by the trace norm of the autocorrelation matrix 
$
\gamma_{\text{out}}(0)=\mathbb{E}\boldsymbol{x}(0)\boldsymbol{x}(0)^{*}
$
Precisely, we prove the following (sharp) upper bound on that power:
\[
\left\Vert \gamma_{\text{out}}\left(0\right)\right\Vert _{1}\leq\mathcal{V}_\text{D}^{2}\,\left\Vert \gamma_{\text{in}}\left(0\right)\right\Vert _{1}
\]
showing that  $\mathcal{V}_\text{D}$, deterministic variability, is the maximal power gain that the system can generate.  To see this, notice first that 
\begin{align*}
 \gamma_{\text{out}}\left(0\right)
 &= \int_{0}^{\infty} \int_{0}^{\infty} 
    e^{s_{1}A} \mathbb{E}\boldsymbol{u}(-s_{1})\boldsymbol{u}(-s_{2})^{*} e^{s_{2}A^{*}} ds_{1}ds_{2} \\
 &= \int_{0}^{\infty} \int_{0}^{\infty}
    e^{s_{1}A} \gamma_{\text{in}}\left(s_{1}-s_{2}\right) e^{s_{2}A^{*}} ds_{1}ds_{2}
\end{align*}
From the Wiener-Khinchin-Einstein theorem \citep{wiener_generalized_1930}, $\gamma_{\text{in}}\left(\tau\right)$
can be decomposed with respect to its power spectral density $d\hat{\gamma}_{\text{in}}\left(\omega\right)$
as
\[
\gamma_{\text{in}}\left(\tau\right)=\int_{\mathbb{R}}e^{-i\omega\tau}d\hat{\gamma}_{\text{in}}\left(\omega\right)
\]
where, defining the truncated Fourier transform $\hat{\boldsymbol{u}}_{T}(\lambda,\omega)=\left(2\pi\right)^{-1}\int_{0}^{T}\boldsymbol{u}(\lambda,t)e^{i\omega t}d\omega$, the power spectral density can be constructed as 
\[
 d\hat{\gamma}_{\text{in}}\left(\omega\right)
 = \lim_{T\rightarrow\infty}\frac{1}{T}\mathbb{E}\hat{\boldsymbol{u}}_{T}(\omega)\hat{\boldsymbol{u}}_{T}(\omega)^{*}d\omega
\]
It then holds that, for any measurable set $U\subset\mathbb{R}$, the matrix $C=\int_{U}d\hat{\gamma}_{\text{in}}\left(\omega\right)$ is positive semi-definite.  In particular, the decomposition yields $\gamma_{\text{in}}\left(0\right)=\int_{\mathbb{R}}d\hat{\gamma}_{\text{in}}\left(\omega\right)$ which is positive semi-definite by construction.  
Now, by linearity of the trace
\[
\left\Vert \gamma_{\text{in}}\left(0\right)\right\Vert _{1}
=\int_{\mathbb{R}}\left\Vert d\hat{\gamma}_{\text{in}}\left(\omega\right)\right\Vert _{1}
\] 
showing that the signal's power is additively distributed amongst its frequency components. 
We use the power spectral decomposition of $\gamma_{\text{in}}(\tau)$ to compute $\gamma_{\text{out}}(0)$. 
It gives
\[
\gamma_{\text{out}}\left(0\right)
= \int_{\mathbb{R}}\int_{0}^{\infty}\int_{0}^{\infty}e^{s_{1}(A-i\omega)}d\hat{\gamma}_{\text{in}}\left(\omega\right)e^{s_{2}(A-i\omega)^{*}}ds_{1}ds_{2}
\]
\[
= \int_{\mathbb{R}}\left(i\omega-A\right)^{-1}d\hat{\gamma}_{\text{in}}\left(\omega\right)\left(i\omega-A\right)^{-1*}
\] 
Using H\"older's inequality, we get
\[
\left\Vert \gamma_{\text{out}}\left(0\right)\right\Vert _{1}
\leq \int_{\mathbb{R}}\left\Vert \left(i\omega-A\right)^{-1}\right\Vert _{\infty}^{2}\left\Vert d\hat{\gamma}_{\text{in}}\left(\omega\right)\right\Vert _{1}
\]
\[
\;\;\;\;\;\;\;\;\;\;\;\;\;\;\;\;\;\;\;\;\;\;\;\;\;\;\;\;\;\;\;\;
\leq \sup_{\omega}\left\Vert \left(i\omega-A\right)^{-1}\right\Vert _{\infty}^{2}\left\Vert \gamma_{\text{in}}\left(0\right)\right\Vert _{1}
=\mathcal{V}_\text{D}^{2}\,\left\Vert \gamma_{\text{in}}\left(0\right)\right\Vert _{1}
\] 
with $\mathcal{V}_\text{D}=\sup_{\omega}\left\Vert \left(i\omega-A\right)^{-1}\right\Vert _{\infty}$
denoting deterministic variability. 

The inequality is strict for
$
\boldsymbol{u}\left(\lambda,t\right)=e^{i(\omega t-\lambda)}\boldsymbol{v}
$
with
$
\lambda\in\left(S^1,d\lambda\right)
$
where $d\lambda$ is the uniform measure on the circle, with $\omega$ and $\boldsymbol{v}\neq0$ satisfying
$
\left\Vert \left(i\omega-A\right)^{-1}\boldsymbol{v}\right\Vert =\mathcal{V}_\text{D}||\boldsymbol{v}||
$.  Indeed, notice that
$
d\hat{\gamma}_{\text{in}}(\omega)=\left(\boldsymbol{v}\boldsymbol{v}^{*}\right)\delta(\omega)d\omega
$
so that
$
\left\Vert \gamma_{\text{in}}\left(0\right)\right\Vert _{1}=||\boldsymbol{v}||^2
$.
Also
$
\gamma_{\text{out}}\left(0\right)=\left(i\omega-A\right)^{-1}\boldsymbol{v}\left(\left(i\omega-A\right)^{-1}\boldsymbol{v}\right)^{*}
$
so that
$
\left\Vert \gamma_{\text{out}}\left(0\right)\right\Vert _{1}=\left\Vert \left(i\omega-A\right)^{-1}\boldsymbol{v}\right\Vert _{\infty}^{2}=\mathcal{V}_\text{D}^{2}\,\left\Vert \gamma_{\text{in}}\left(0\right)\right\Vert _{1}
$.

\section{Harte's integrative measure of ecological stability}\label{sec:harte}

When defining stochastic variability section \ref{sec:stoch invariability}, to normalize the noise covariance matrix $\Sigma$ and to measure its effect on the system response $C_*$, we used the Frobenius norm $||\cdot||_\text{F}$. Other choices can be made, leading to slightly different results and interpretations. 
In this appendix we consider the trace norm 
\[
\left\Vert \Sigma\right\Vert _\text{Tr}=\mbox{Tr}\left(\sqrt{\Sigma^{\top}\Sigma}\right)
\]
which compares to the Frobenius norm as $\left\Vert \Sigma\right\Vert _\text{F} \leq \left\Vert \Sigma\right\Vert _\text{Tr} \leq \sqrt{S} \left\Vert \Sigma\right\Vert _\text{F}$ (recall that $S$ is the dimension of the system, e.g., number of species).  This choice leads to an interpretable notion of variability and facilitates the comparison between invariability and resilience.  Indeed, the trace norm of the system response $C_*$ is simply the expected square distance to equilibrium of the stationary distribution of $\boldsymbol{X}_*$,
\[
||C_*||_\text{Tr}=\mbox{Tr}\left(C_{*}\right)=\sum_{i=1}^{S}\mathbb{E}(X_{i}^{2})=\mathbb{E}(\left\Vert \boldsymbol{X}_*\right\Vert ^{2}).  
\]
For the trace norm, by convexity, the maximizing matrix $\Sigma$ is an orthogonal projector $\boldsymbol{u}\boldsymbol{u}^{\top}$ on a specific direction spanned by the vector $\boldsymbol{u}$, with $\left\Vert \boldsymbol{u}\right\Vert =1$.  One can then express the associated stationary covariance matrix as $C_{*}=\int_{0}^{\infty}e^{tA}\boldsymbol{u}(e^{tA}\boldsymbol{u})^{\top}dt$.  This leads to a different expression of intrinsic variability, namely (using linearity of the trace),
\begin{equation}
\mathcal{V}'_\text{S}=\sup_{\left\Vert \mathbf{u}\right\Vert =1}\int_{0}^{\infty}\left\Vert e^{tA}\boldsymbol{u}\right\Vert ^{2}dt\label{eq:VS-1}
\end{equation}
This definition of variability relates to the one derived using the Frobenius norm. In fact, it is rather straightfoward to show that the norm comparison is transported to the variability notions, giving 
\begin{equation}
\mathcal{V}_\text{S} \leq \mathcal{V}'_\text{S} \leq \sqrt{S} \mathcal{V}_\text{S}.\label{eq:VS1-VS2}
\end{equation}
At this point we can make an important remark on the link between intrinsic variability and resilience. Initial and asymptotic resilience are short- and long-term characteristic of the transient regime following a pulse perturbation. We see from (\ref{eq:VS-1}) that stochastic variability is related to the whole transient. 

In fact, Harte \citep{harte_ecosystem_1979} had proposed a stability measure $\mathcal{S}$, designed to integrate both short- and long-term responses of ecological communities. With our notations, for pulse perturbations, Harte's measure reads
\[
\mathcal{S}^{-1}=\sum_{i=1}^{S}\int_{0}^{\infty}|x_{i}(t)|^{2}dt=\int_{0}^{\infty}\left\Vert e^{tA}\boldsymbol{x}_{0}\right\Vert ^{2}\, dt
\]
Harte argued that this measure was empirically convenient, yet ``does not connect in any transparent way with methods of mathematical analysis''. To some extent, we have revealed this connection. The maximal value for $\mathcal{S}^{-1}$ over normalized pulse perturbations is exactly $\mathcal{V}'_\text{S}$, that is intrinsic stochastic variability, when defined with respect to the trace norm.

\section{Proof of the general stability ordering\label{sec:proof1}}

Let us here briefly sketch the proof of the chain of inequalities
(\ref{eq:dynamical_stab_ordering}) 
\[
\mathcal{R}_{0} \leq \mathcal{I}_\text{S} \leq \mathcal{I}_\text{D} \leq \mathcal{R}_{\infty}
\] 
Where $\mathcal{R}_X, \; \mathcal{I}_\text{Y}\;(X=0,\infty; \; \text{Y}=\text{S},\text{D})$ are the four intrinsic stability measures defined in the main text. We start from the classical inequality from pseudo-spectra analysis, giving a lower bound on the frequency response of the system $d\boldsymbol{x}/dt=A\boldsymbol{x}$ in terms of the excitation frequency $\omega$ and the dominant eigenvalue of the community matrix $A$: 
\begin{equation}
\left\Vert \left(i\omega-A\right)^{-1}\right\Vert \geq\left|i\omega-\lambda_{\mathrm{dom}}\right|^{-1}\label{eq:starting-point2}
\end{equation}
a proof of which can be found in the book by  \cite{trefethen_spectra_2005}. Another useful relation shows that resilience bounds the amplification envelope, in the sense that
\begin{equation}
e^{-\mathcal{R}_{\infty}t}\leq\left\Vert e^{tA}\right\Vert \leq e^{-\mathcal{R}_{0}t}\label{eq:starting-point}
\end{equation}
From the definition of deterministic variability (\ref{eq:variability}), the first expression (\ref{eq:starting-point2}) implies that 
$
\mathcal{V}_\text{D}\geq\left|\Re\left(\lambda_{\mathrm{dom}}\right)\right|^{-1}=\mathcal{R}_{\infty}^{-1}
$,
hence that $\mathcal{I}_\text{D}\leq\mathcal{R}_{\infty}$.  At this point, it is useful to give an alternative expression for the system's response direction $\boldsymbol{w}$ appearing in (\ref{eq:w}), namely: 
\[
\boldsymbol{w} = \int_{0}^{\infty}e^{t(A-i\omega)}\boldsymbol{u}\, dt 
\] 
Recall that the norm of this vector quantifies variability under deterministic forcing. By definition, we then have
\begin{equation}
\mathcal{V}_\text{D} \leq \max_{||\boldsymbol{u}||=1} \int_{0}^{\infty}\left\Vert e^{tA}\boldsymbol{u}\right\Vert \, dt \label{eq:VD}
\end{equation}
This shows that variability is bounded by the area under the amplification envelope $\left\Vert e^{tA}\right\Vert$, so that (\ref{eq:starting-point}) gives $\mathcal{R}_0 \leq \mathcal{I}_\text{D}$. 

We have showed that $\mathcal{R}_0 \leq \mathcal{I}_\text{D} \leq  \mathcal{R}_\infty$. We now prove  that
$
\mathcal{R}_{0} \leq \mathcal{I}_\text{S}
$.
We use the trace-normalization described in \ref{sec:harte}, to define intrinsic variability  (\ref{eq:VS-1}) and put  $\mathcal{I}'_\text{S}=\frac{1}{2} \mathcal{V'}_\text{S}^{-1}$. From (\ref{eq:VS1-VS2}) we have that  that $\mathcal{I}'_\text{S}\leq\mathcal{I}_\text{S}$. The expression (\ref{eq:VS-1}), along with (\ref{eq:starting-point}) above, gives the expected inequality. 

It thus remains to be proved that 
$
\mathcal{I}_\text{S}\leq\mathcal{I}_\text{D}                   
$. 
We shall need a lemma from linear algebra
\begin{lemma}
For any invertible matrix $B$ acting on $\mathbb{R}^N$, it holds that:
\[
\min_{x \in  \mathbb{R}^N ; \left\Vert x \right\Vert = 1} \left\Vert B x \right\Vert = ( \max_{y \in  \mathbb{R}^N ; \left\Vert y \right\Vert = 1} \left\Vert B^{-1} y  \right\Vert)^{-1}  
\]
\end{lemma}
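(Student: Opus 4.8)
The plan is to identify the two quantities as reciprocals of one another through the change of variables $y = Bx$, avoiding any appeal to the singular value decomposition (the statement is, of course, just the identity $\sigma_{\min}(B) = \sigma_{\max}(B^{-1})^{-1}$, but a self-contained argument is cleaner here). First I would record that both extrema are attained and strictly positive: the unit sphere $\{x : \|x\|=1\}$ in $\mathbb{R}^N$ is compact and the maps $x\mapsto\|Bx\|$ and $y\mapsto\|B^{-1}y\|$ are continuous, so the minimum $m:=\min_{\|x\|=1}\|Bx\|$ and the maximum $M:=\max_{\|y\|=1}\|B^{-1}y\|$ exist; invertibility of $B$ forces $Bx\neq 0$ whenever $x\neq 0$, hence $m>0$, and similarly $M>0$.

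For the bound $M\le m^{-1}$: given any nonzero $y$, set $x=B^{-1}y\neq 0$, so that $y=Bx$, and apply the definition of $m$ to the unit vector $x/\|x\|$ to get
\[
\|y\| = \|Bx\| = \|x\|\,\bigl\|B(x/\|x\|)\bigr\| \ge m\,\|x\| = m\,\|B^{-1}y\|.
\]
Thus $\|B^{-1}y\|\le m^{-1}\|y\|$ for every $y$, and taking $\|y\|=1$ yields $M\le m^{-1}$.

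For the reverse bound $M\ge m^{-1}$: choose a unit vector $x_0$ with $\|Bx_0\|=m$ and put $y_0=Bx_0/m$, so that $\|y_0\|=1$ while $B^{-1}y_0 = x_0/m$ has norm $m^{-1}$; hence $M\ge\|B^{-1}y_0\|=m^{-1}$. Combining the two inequalities gives $M=m^{-1}$, that is, $m = M^{-1}$, which is precisely the claimed equality.

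I do not expect a serious obstacle: the whole argument is essentially the one-line substitution $y=Bx$ once the extrema are known to be attained. The only point that genuinely needs care is invoking compactness of the unit sphere together with invertibility of $B$ to guarantee that $m>0$ and that a minimizing $x_0$ exists, so that we may legitimately invert $m$ and carry out the construction of $y_0$.
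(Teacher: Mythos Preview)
Your proof is correct and follows essentially the same approach as the paper's: both establish the two inequalities $m\le M^{-1}$ and $M^{-1}\le m$ by the substitution $y=Bx$, exhibiting the minimizer of $\|Bx\|$ (respectively the maximizer of $\|B^{-1}y\|$) as a test vector for the other problem. Your version adds the explicit compactness/positivity justification for the existence of $m>0$ and a minimizing $x_0$, which the paper leaves implicit.
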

\begin{proof}
Take $x_*=B^{-1}y/||B^{-1}y||$ with $y$ normalized and realizing the max of $\left\Vert B^{-1} y  \right\Vert $. By construction
$
\min_{x \in  \mathbb{R}^N ; \left\Vert x \right\Vert = 1} \left\Vert B x \right\Vert
\leq \left\Vert B x_* \right\Vert 
= ( \max_{y \in  \mathbb{R}^N ; \left\Vert y \right\Vert = 1} \left\Vert B^{-1} y  \right\Vert)^{-1}.
$ 
To show that taking the min over all normalized elements $x$ does not give anything smaller, it suffices to choose $y_*=Bx/||Bx||$ with $x$ normalized and realizing the min of $||Bx||$. By construction
$
\max_{y \in  \mathbb{R}^N ; \left\Vert y \right\Vert = 1} \left\Vert B^{-1} y  \right\Vert
\geq \left\Vert B^{-1} y_* \right\Vert 
= ( \min_{x \in  \mathbb{R}^N ; \left\Vert x \right\Vert = 1} \left\Vert B x  \right\Vert)^{-1} 
$, which is equivalent to
$
\min_{x \in  \mathbb{R}^N ; \left\Vert x \right\Vert = 1} \left\Vert B x \right\Vert
\geq ( \max_{y \in  \mathbb{R}^N ; \left\Vert y \right\Vert = 1} \left\Vert B^{-1} y  \right\Vert)^{-1};
$
proving the lemma.
\end{proof}
Now, with the above lemma, we get that
\[
2\mathcal{I}_{\text{S}}=(\sup_{\left\Vert \Sigma\right\Vert _{\mathrm{F}}=1}\left\Vert \hat{A}^{-1}\Sigma\right\Vert )^{-1}=\inf_{\left\Vert C\right\Vert _{\mathrm{F}}=1}\left\Vert \hat{A}C\right\Vert 
\]
and similarly
\[
\mathcal{I}_\text{D}=\inf_{\omega,\left\Vert \boldsymbol{v}\right\Vert =1}\left\Vert \left(i\omega-A\right)\boldsymbol{v}\right\Vert  
\]
Therefore, for any normalized matrix $C$, 
\[
2\mathcal{I}_{\text{S}}\leq\inf_{\left\Vert C\right\Vert _{\mathrm{F}}=1}\left\Vert \hat{A}C\right\Vert 
\]
If we choose $C$ as a rank-one orthonormal projector $C=\boldsymbol{v}\boldsymbol{v}^{*}$.
We then have that 
\[
2\mathcal{I}_{\text{S}}\leq \left\Vert \hat{A}C\right\Vert _\text{F}=\left\Vert \left(A\boldsymbol{v}\right)\boldsymbol{v}^{*}+\boldsymbol{v}\left(A\boldsymbol{v}\right)^{*}\right\Vert _\text{F}=\left\Vert \left(\left(i\omega-A\right)\boldsymbol{v}\right)\boldsymbol{v}^{*}+\boldsymbol{v}\left(\left(i\omega-A\right)\boldsymbol{v}\right)^{*}\right\Vert _\text{F}
\]
for any real $\omega$.
Choosing  $\boldsymbol{v}$ and $\omega$ such that $\mathcal{I}_\text{D}=\left\Vert \left(i\omega-A\right)\boldsymbol{v}\right\Vert$, yields 
\[
2\mathcal{I}_{\text{S}} \leq \left\Vert \hat{A}C\right\Vert _\text{F}\leq2\mathcal{I}_\text{D}.
\]
giving the full ordering (\ref{eq:dynamical_stab_ordering}).

\end{document}